\newtheorem{theorem}{Theorem}
\newtheorem{remark}{Remark}
\newtheorem{proposition}{Proposition}
\newtheorem{problem}{Problem}
\let\NAT@parse\undefined
\title{\LARGE \bf
A United Framework for Planning Electric Vehicle Charging Accessibility 
}
\author{Tony Kinchen$^{1}$, Panagiotis Typaldos$^{2}$, and Andreas A. Malikopoulos$^{3}$%
\thanks{*This research was supported in part by NSF under Grants CNS-2401007, CMMI-2348381, IIS-2415478, and in part by MathWorks.}%
\thanks{$^{1}$Tony Kinchen is with the Program of Systems Engineering, Cornell University, Ithaca, NY, USA. {\tt tjk238@cornell.edu}}%
\thanks{$^{2}$Panagiotis Typaldos is with the Department of Civil and Environmental Engineering, Cornell University, Ithaca, NY, USA. {\tt pt432@cornell.edu}}%
\thanks{$^{3}$Andreas A. Malikopoulos is with both the Program of Systems Engineering and the Department of Civil and Environmental Engineering, Cornell University, Ithaca, NY, USA. {\tt amaliko@cornell.edu}}%
}
\begin{document}

\maketitle
\thispagestyle{empty}
\pagestyle{empty}

\begin{abstract}
The shift towards electric vehicles (EVs) is crucial for establishing sustainable and low-emission urban transportation systems. However, the success of this transition depends on the strategic placement of the charging infrastructure. This paper addresses the challenge of optimizing charging station locations in dense urban environments while balancing efficiency with spatial accessibility. We propose an optimization framework that integrates traffic simulation, energy consumption modeling, and a mobility equity measure to evaluate the social reach of each potential charging station. Using New York City as a case study, we demonstrate consistent improvements in accessibility (15-20\% reduction in travel time variability). Our results provide a scalable methodology for incorporating equity considerations into EV infrastructure planning, although economic factors and grid integration remain important areas for future development.
\end{abstract}

\section{INTRODUCTION}

With increasing population and urbanization, traffic congestion has become a major problem for transportation systems worldwide. The increasing adoption of electric vehicles (EVs) introduces additional challenges related to the availability of materials, energy resources, and supporting infrastructure. With these changes in mind, strategic planning and effective deployment of charging stations become critical to meeting growing urban mobility demands while ensuring sustainability \cite{TowardsSustainableMobility}.

Despite the ongoing expansion toward EVs, the poor placement of charging stations often leads to bottlenecks and oversaturation in specific areas, resulting in inefficiencies and limited usability. To this end, infrastructure planning needs to prioritize both cost and efficiency, as well as spatial accessibility. This challenge is particularly pronounced in densely urban environments, where underserved communities often face limited access to essential transportation services.

\subsection{Related Work}

Recent research has explored various optimization techniques for EV charging station siting to improve utilization and system efficiency \cite{ImprovingMobility}. Approaches such as immune algorithms, Voronoi diagrams, and vehicle behavior modeling have demonstrated success in improving network performance and accessibility in urban environments \cite{EVChargingPlacement}. Environmental concerns are increasingly integrated into planning through methods like locational marginal emissions analyses, which enable infrastructure placement strategies that reduce carbon impacts while maintaining operational efficiency \cite{LocationalMarginalEmissions, sang2023lme}. Efforts to co-locate stations with renewable sources such as solar or wind have demonstrated benefits in minimizing grid reliance and environmental impacts \cite{GreenIoVEcosystem, OptimizationChargingStations}. Additionally, policy assessments such as the New York State Transportation Electrification Report and research on smart mobility highlight investment needs and opportunities for aligning infrastructure with sustainable development objectives \cite{NYSERDAElectrification, SmartMobilitySDGs}.

Simulation tools such as the Simulation of Urban MObility (SUMO) model traffic dynamics and EV charging behavior under realistic load conditions \cite{behrisch2011sumo, spachtholz2024charging, song2022sumo}. On a broader scale, system-level modeling frameworks analyze drivetrain configurations and energy efficiency in EVs \cite{SystemModelingHEV}. Eco-routing algorithms have also been proposed to reduce vehicle energy consumption \cite{QuantifyingEnvironmentalImpacts, asamer2016eco, 2024_Sagaama, DeNunzio2017}, while distributed EV-to-EV charging frameworks aim to improve resilience and flexibility in charging networks \cite{SoftwareDefinedEVtoEV}.

Despite these efforts, many of these strategies fail to directly address efficient accessibility. While proximity to demand centers and environmental objectives are considered, spatial accessibility is often overlooked. Recent advances in connected automated vehicles (CAVs) and coordinated traffic systems have shown potential for improving accessibility and network efficiency \cite{le2024distributed}. Techniques such as Bayesian optimization and dynamic trajectory coordination further enable real-time adaptation to evolving traffic patterns \cite{le2024controller, chalaki2020experimental}. Dynamic trip accessibility considerations are also increasingly relevant in EV infrastructure planning \cite{Bai2025routing, typaldos2025combining}. However, technoclogical improvements alone are not sufficient-ensuring accessibility requires metrics that are capable of accounting for how the success of the system is distributed across populations. 

\subsubsection{The Mobility Equity Metric (MEM)}

Efficient access to transportation infrastructure is essential for sustainable and inclusive urban development. Traditional transportation metrics often focus on efficiency—minimizing travel time or cost—without considering who benefits from these improvements. As a result, under-served areas may continue to face limited access to critical services such as jobs, healthcare, and education. MEM addresses this concern by providing a way to assess how fairly transportation resources and opportunities are distributed across different regions and populations \cite{Bang2024emergingequity}.

MEM evaluates both the ease of reaching essential services from different locations and the fairness of that access across the entire population. Rather than focusing on averages, it emphasizes whether all communities—regardless of income or geographic location—have similar levels of access. A high MEM value indicates that accessibility is distributed equitably, while a low value signals that some groups face significant barriers to mobility. The original formulation focused on generalized accessibility, but recent refinements have incorporated spatial reachability and economic cost to improve the metric’s resolution and computational tractability \cite{Bang2024ifac}.

This metric considers factors such as transportation mode availability, travel costs, proximity to important services, and the ability of residents to reach destinations within reasonable time frames. It also accounts for how differences in income or mobility constraints affect people’s real-world ability to access these services. In recent applications, MEM has been embedded into a game-theoretic routing framework, where it serves as an optimization objective to guide multi-modal traffic flows based on varying traveler behaviors \cite{Bang2024acc}.

By highlighting disparities in transportation access, MEM helps policymakers and planners design infrastructure projects that not only improve overall system performance but also advance social accessibility. In the context of electric vehicle charging infrastructure, this means ensuring that new stations are located in a way that benefits all communities—not just those in wealthier or better-connected areas. Such frameworks support resource allocation that balances societal fairness with efficiency considerations \cite{Malikopoulos2024}.

Additionally, the integration of real-time data from connected vehicles and the application of a stochastic control framework can further improve accessibility outcomes \cite{le2024stochastic}. These methods help guide vehicles toward available charging stations while minimizing disruptions and improving overall infrastructure resilience \cite{Bang2024emergingequity}.

\subsection{Contributions}



This paper presents a unified framework for planning equitable and efficient EV charging infrastructure in dense urban environments. Addressing the limitations of existing methods that often neglect spatial accessibility and system dynamics, we make the following specific contributions: (i) We develop the first framework to directly integrate the Mobility Equity Metric (MEM) into EV charging station optimization, providing a quantitative approach to balance efficiency with spatial accessibility; (ii) through large-scale SUMO simulations using NYC road networks, we demonstrate that accessibility-aware planning achieves measurable improvements in travel time equity (reduced variability) with modest distance trade-offs, providing empirical evidence for the viability of equity-focused infrastructure planning; (iii) we establish mathematical relationships that formally characterize the efficiency-accessibility trade-offs, enabling planners to make informed decisions about the weighting parameters based on policy priorities; and our optimization approach is computationally tractable for large urban networks and provides a foundation for incorporating equity considerations into existing infrastructure planning tools.

\subsection{Paper Structure}

The remainder of this paper is organized as follows: In section II we describe the problem formulation and optimization model. In section III we present the simulation and experimental setup. In section IV we evaluate system performance, and in Section V we conclude with future research directions and policy implications.

\section{Problem Formulation}

In our optimization framework, several key variables are introduced to model the spatial allocation and service dynamics of EV charging infrastructure. The set \( \mathrm{I} \) represents the locations of EV demand points, while \( \mathrm{J} \) denotes the candidate charging station selections. Each demand point \( \mathrm{i} \in \mathrm{I} \) is associated with a nonnegative weight \( \mathrm{d}_{\mathrm{i}} \), representing the intensity or frequency of charging demand at that location.The travel cost or distance between a demand point \( \mathrm{i} \) and a candidate charging station \( \mathrm{j} \) is denoted by \( \mathrm{c}_{\mathrm{i}\mathrm{j}} \geq 0 \). The binary decision variable \( \mathrm{x}_{\mathrm{i}\mathrm{j}} \in \{0,1\} \) indicates whether demand point \( \mathrm{i} \) is assigned to charging station \( \mathrm{j} \), and \( \mathrm{y}_{\mathrm{j}} \in \{0,1\} \) indicates whether a candidate station \( \mathrm{j} \) is selected . Each station location \( \mathrm{j} \in \mathrm{J} \) has a capacity limit \( \mathrm{s}_{\mathrm{j}} > 0 \), representing the maximum number of EVs it can serve. To account for accessibility considerations, each charging station \( \mathrm{j} \) is also assigned an accessibility score \( \varepsilon_{\mathrm{i}} \geq 0 \), and a tunable parameter \( \lambda \geq 0 \) is introduced to balance the trade-off between minimizing travel cost and improving accessiblity. The total number of stations allowed for selection is constrained by the parameter \( \mathrm{p} \in \mathbb{Z}_{\geq 0} \). A summary of these variables is provided in Table~\ref{tab:model_variables}.

\begin{table}[tb]
\centering
\caption{Summary of Optimization Model Variables}
\label{tab:model_variables}
\begin{tabular}{ll}
\hline
\textbf{Symbol} & \textbf{Description} \\
\hline
$ \mathrm{I} $ & Set of EV demand points \\
$ \mathrm{J} $ & Set of candidate charging station locations \\
$ \mathrm{d}_{\mathrm{i}} $ & Demand weight at point \( \mathrm{i} \in \mathrm{I} \) \\
$ \mathrm{c}_{\mathrm{i}\mathrm{j}} $ & Distance or travel time from \( \mathrm{i} \) to \( \mathrm{j} \) \\
$ \mathrm{x}_{\mathrm{i}\mathrm{j}} $ & Binary variable: 1 if point \( \mathrm{i} \) is assigned to station \( \mathrm{j} \) \\
$ \mathrm{y}_{\mathrm{j}} $ & Binary variable: 1 if station \( \mathrm{j} \) is selected \\
$ \mathrm{s}_{\mathrm{j}} $ & Capacity of station \( \mathrm{j} \) \\
$ \varepsilon_{\mathrm{i}} $ & Accessibility score for demand point \( \mathrm{i} \) \\
$ \lambda $ & Weighting parameter for accessibility \\
$ \mathrm{p} $ & Total number of stations \\
\hline
\end{tabular}
\end{table}

To effectively incorporate accessibility considerations into the optimization process, we integrate the MEM directly into our infrastructure planning framework. MEM quantitatively evaluates disparities in accessibility across a transportation network, helping planners identify where infrastructure investments can most effectively close access gaps. This is achieved by computing a MI for each location, which captures the ease of reaching essential services based on travel time, cost, and available modes.

Building on this, the MEM uses a Gini-style inequality measure to assess how evenly accessibility is distributed across the population. Higher MEM values indicate a more equitable system, while lower values highlight regions or populations facing significant mobility challenges. By integrating MEM into our optimization model, we prioritize infrastructure placements that not only improve efficiency but also promote greater access, ensuring that underserved communities receive greater consideration in planning decisions.

\textbf{Definition 1.}  
MI, denoted as \( \varepsilon_{\mathrm{i}} \), measures the level of accessibility to essential services from a specific location or node \( \mathrm{i} \in \mathrm{V} \). It accounts for travel costs, transportation modes, service importance, and time constraints. The MI is given as:
\begin{equation*}
    \varepsilon_{\mathrm{i}} = \sum_{\mathrm{m} \in \mathrm{M}} e^{-\kappa_{\mathrm{i}} \mathrm{c}_{\mathrm{m}}} \left( \sum_{\mathrm{s} \in \mathrm{S}} \beta_{\mathrm{s}} \, \tilde{\sigma}^{\mathrm{s}}_{\mathrm{i},\mathrm{m}}(\tau_{\mathrm{m}}) \right),
\end{equation*}
where \( \mathrm{M} \) represents the set of transportation modes (e.g., walking, public transit, private car, and cycling), \( \mathrm{S} \) is the set of service types (such as hospitals, grocery stores, pharmacies, and schools), \( \mathrm{c}_{\mathrm{m}} \) is the travel cost per mile for mode \( \mathrm{m} \), \( \kappa_{\mathrm{i}} \) captures price sensitivity at node \( \mathrm{i} \), \( \beta_{\mathrm{s}} \) reflects service priority (with higher weights for essential services like medical facilities and lower weights for non-essential ones like restaurants), and \( \tilde{\sigma}^{\mathrm{s}}_{\mathrm{i},\mathrm{m}}(\tau_{\mathrm{m}}) \) is the normalized number of accessible services of type \( \mathrm{s} \) within time threshold \( \tau_{\mathrm{m}} \).


\textbf{Definition 2.}  
The accessibility-weighted infrastructure objective quantifies the benefit of improving charging access across the network. It is defined as:
\[
\lambda \sum_{\mathrm{i} \in \mathrm{I}} \sum_{\mathrm{j} \in \mathrm{J}} \varepsilon_{\mathrm{i}} \, \mathrm{x}_{\mathrm{i}\mathrm{j}} \, \mathrm{y}_{\mathrm{j}},
\]
where \( \lambda \geq 0 \) is a tunable parameter controlling the importance of accessibility in the objective function, \( \varepsilon_i \geq 0 \) is the accessibility weight assigned to demand location \( i \in \mathrm{I} \), \( \mathrm{x}_{\mathrm{i}\mathrm{j}} \in \{0,1\} \) indicates whether demand point \( i \) is assigned to station \( j \), and \( \mathrm{y}_{\mathrm{j}} \in \{0,1\} \) indicates whether a candidate charging station is selected at location \( j \).

To formalize the practical and operational limitations involved in EV charging infrastructure deployment, a series of constraints are assigned to ensure feasible station placement, balanced demand distribution, and compliance with system capacity limits.

To ensure that each demand point is served by one and only one station, we impose the following constraint:
\begin{equation}
    \sum_{\text{j} \in \mathrm{J}} \mathrm{x}_{\text{i}\text{j}} = 1, \quad \forall\, \text{i} \in \mathrm{I}, \tag{1}
\end{equation}
where \( \mathrm{x}_{\text{i}\text{j}} = 1 \) only if demand point \( \text{i} \) is assigned to station \( \text{j} \), ensuring full coverage of all EV demand locations.

To limit the number of charging stations that can be used, we impose the following constraint:
\begin{equation}
    \sum_{\text{j} \in \mathrm{J}} \mathrm{y}_{\text{j}} = \mathrm{p}, \tag{2}
\end{equation}
where \( \mathrm{p} \in \mathbb{Z}_{\geq 0} \) is the total number of stations permitted in the infrastructure plan.

To ensure that demand is only assigned to operational stations, we impose the following consistency constraint:
\begin{equation}
    \mathrm{x}_{\text{i}\text{j}} \leq \mathrm{y}_{\text{j}}, \quad \forall\, \text{i} \in \mathrm{I},\ \forall\, \text{j} \in \mathrm{J}, \tag{3}
\end{equation}
where this condition links the assignment variable \( \mathrm{x}_{\text{i}\text{j}} \) to the station selection decision \( \mathrm{y}_{\text{j}} \), ensuring assignments are only made to selected stations.

To enforce the service capacity limitations of each charging station, we impose the following constraint:
\begin{equation}
    \sum_{\text{i} \in \mathrm{I}} \mathrm{x}_{\text{i}\text{j}} \leq \mathrm{s}_{\text{j}}\, \mathrm{y}_{\text{j}}, \quad \forall\, \text{j} \in \mathrm{J}, \tag{4}
\end{equation}
where \( \mathrm{s}_{\text{j}} \) is the maximum number of demand points that station \( \text{j} \) can support, ensuring stations are not overloaded.

To ensure that the selected stations provide sufficient capacity to meet overall system demand, we impose the following feasibility constraint:
\begin{equation}
    \sum_{\text{j} \in \mathrm{J}} \mathrm{s}_{\text{j}}\, \mathrm{y}_{\text{j}} \geq \sum_{\text{i} \in \mathrm{I}} \mathrm{d}_{\text{i}}, \tag{5}
\end{equation}
where \( \mathrm{s}_{\text{j}} \) is the capacity of station \( \text{j} \), and \( \mathrm{d}_{\text{i}} \) is the demand weight at location \( \text{i} \). This condition guarantees that the combined capacity of all selected stations is at least equal to the total demand across the network, preventing infeasible configurations.


\begin{problem}[Accessibility-Aware Charging Station Placement]
The goal is to determine the optimal assignment of demand points to candidate charging stations, along with the selection of those stations, that minimizes travel cost while incorporating accessibility. The overall optimization problem is formulated as:

\begin{align}
    \min \quad & \sum_{\mathrm{i} \in \mathrm{I}} \sum_{\mathrm{j} \in \mathrm{J}} \mathrm{d}_{\mathrm{i}}\, \mathrm{c}_{\mathrm{ij}}\, \mathrm{x}_{\mathrm{ij}}
    - \lambda \sum_{\mathrm{i} \in \mathrm{I}} \sum_{\mathrm{j} \in \mathrm{J}} \varepsilon_{\mathrm{i}}\, \mathrm{x}_{\mathrm{ij}}\, \mathrm{y}_{\mathrm{j}} \tag{7} \\
    \text{s.t.} \quad & \text{constraints (1)--(6).} \notag
\end{align}

\end{problem}


This macroscopic flow-based optimization framework is scalable for large urban networks and can be efficiently solved using standard integer programming solvers such as Gurobi\cite{leib2023optimization}. The accessibility term is written with a negative sign to reflect its role as a benefit being maximized within a minimization framework, ensuring accessibility is promoted without altering the problem structure.
\section{Theoretical Results}

\begin{theorem}[Feasibility]
Let \( \mathrm{D} = \sum_{\mathrm{i} \in \mathrm{U}} \mathrm{d}_{\mathrm{i}} \) denote the total demand and \( \mathrm{C} = \sum_{\mathrm{j} \in \mathrm{S}} \mathrm{s}_{\mathrm{j}} \) the total candidate station capacity. If
\begin{equation}
\mathrm{C} = \sum_{\mathrm{j} \in \mathrm{S}} \mathrm{s}_{\mathrm{j}} \geq \sum_{\mathrm{i} \in \mathrm{U}} \mathrm{d}_{\mathrm{i}} = \mathrm{D}, \tag{8} \label{eq:capacity_condition}
\end{equation}
then there exists a feasible assignment \( \{\mathrm{x}_{\mathrm{i}\mathrm{j}},\, \mathrm{y}_{\mathrm{j}}\} \) satisfying constraints \textnormal{(1)}–\textnormal{(6)}.
\end{theorem}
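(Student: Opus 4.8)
The plan is to prove feasibility constructively, exhibiting an explicit $0/1$ assignment $\{\mathrm{x}_{\mathrm{ij}},\mathrm{y}_{\mathrm{j}}\}$ that satisfies (1)--(6); since the objective (7) is irrelevant to feasibility, it can be ignored. The construction proceeds in two stages: first decide which stations to open (the $\mathrm{y}$ variables), then route each demand point to an open station (the $\mathrm{x}$ variables). For the first stage I would pick a subset $\mathrm{S}' \subseteq \mathrm{S}$ with $|\mathrm{S}'| = \mathrm{p}$ and $\sum_{\mathrm{j} \in \mathrm{S}'} \mathrm{s}_{\mathrm{j}} \ge \mathrm{D}$; such a set exists because, by hypothesis (8), the full candidate set has total capacity at least $\mathrm{D}$, so opening the $\mathrm{p}$ stations of largest capacity suffices (this invokes the standing assumption that $\mathrm{p}$ stations can cover demand; when $\mathrm{p} = |\mathrm{S}|$ it is immediate from (8)). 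Set $\mathrm{y}_{\mathrm{j}} = 1$ for $\mathrm{j} \in \mathrm{S}'$ and $\mathrm{y}_{\mathrm{j}} = 0$ otherwise, which directly yields constraints (2) and (5).

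For the second stage I would assign demand points greedily. Maintain a residual-capacity counter $\mathrm{r}_{\mathrm{j}}$ initialized to $\mathrm{s}_{\mathrm{j}}$ for each open station $\mathrm{j} \in \mathrm{S}'$. Process the demand points $\mathrm{i} \in \mathrm{U}$ in any order; for each $\mathrm{i}$, choose an open station $\mathrm{j}$ with $\mathrm{r}_{\mathrm{j}} \ge 1$, set $\mathrm{x}_{\mathrm{ij}} = 1$ and $\mathrm{x}_{\mathrm{ij}'} = 0$ for every other $\mathrm{j}'$, and decrement $\mathrm{r}_{\mathrm{j}}$. The key invariant is that after $\mathrm{k}$ assignments $\sum_{\mathrm{j}\in\mathrm{S}'}\mathrm{r}_{\mathrm{j}} = \sum_{\mathrm{j}\in\mathrm{S}'}\mathrm{s}_{\mathrm{j}} - \mathrm{k}$, so whenever $\mathrm{k} < |\mathrm{U}| \le \sum_{\mathrm{j}\in\mathrm{S}'}\mathrm{s}_{\mathrm{j}}$ the total residual capacity is strictly positive and some open station still has room; hence the greedy step never stalls and terminates with every demand point assigned. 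It then remains to verify the other constraints: (1) holds since each $\mathrm{i}$ gets exactly one assignment; (3) holds since $\mathrm{x}_{\mathrm{ij}} = 1$ only for $\mathrm{j}\in\mathrm{S}'$, where $\mathrm{y}_{\mathrm{j}} = 1$; (4) holds because at termination $\sum_{\mathrm{i}}\mathrm{x}_{\mathrm{ij}} = \mathrm{s}_{\mathrm{j}} - \mathrm{r}_{\mathrm{j}} \le \mathrm{s}_{\mathrm{j}} = \mathrm{s}_{\mathrm{j}}\mathrm{y}_{\mathrm{j}}$ for open stations and $\sum_{\mathrm{i}}\mathrm{x}_{\mathrm{ij}} = 0$ for closed ones; and the integrality conditions hold by construction.

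The step that needs care — and the main obstacle — is reconciling constraint (4), which bounds the \emph{number} of demand points served by station $\mathrm{j}$, with hypothesis (8), which compares total capacity to the sum of demand \emph{weights} $\mathrm{d}_{\mathrm{i}}$. The greedy argument requires $|\mathrm{U}| \le \sum_{\mathrm{j}\in\mathrm{S}'}\mathrm{s}_{\mathrm{j}}$, which follows from (8) exactly when each $\mathrm{d}_{\mathrm{i}} \ge 1$, i.e. every demand point represents at least one unit of demand — the natural modeling convention, which I would state explicitly. If instead (4) is meant to be demand-weighted, the counting argument must be replaced by a network-flow argument: feasibility of the transportation subproblem on the bipartite graph between $\mathrm{U}$ and $\mathrm{S}'$ follows from max-flow/min-cut (equivalently Hall's condition) once $\sum_{\mathrm{j}\in\mathrm{S}'}\mathrm{s}_{\mathrm{j}} \ge \mathrm{D}$, with the additional caveat that indivisibility of the $\mathrm{d}_{\mathrm{i}}$ forces no single demand weight to exceed every station capacity. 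I would isolate this as the only nontrivial point and dispatch it with a standard flow lemma, leaving the rest of the verification routine.
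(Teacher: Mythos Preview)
Your approach is essentially the same as the paper's---a constructive greedy assignment that tracks residual station capacity---but you are more careful on two points the paper glosses over. First, the paper never explicitly verifies constraint~(2): it simply sets $\mathrm{y}_{\mathrm{j}}=1$ whenever a station receives an assignment, which need not yield exactly $\mathrm{p}$ open stations; your two-stage construction (choose $\mathrm{S}'$ with $|\mathrm{S}'|=\mathrm{p}$ first, then assign) handles this directly, and you rightly note that the existence of such an $\mathrm{S}'$ requires either $\mathrm{p}=|\mathrm{S}|$ or a standing assumption beyond~(8). Second, you correctly flag the mismatch between constraint~(4), which caps the \emph{count} $\sum_{\mathrm{i}}\mathrm{x}_{\mathrm{ij}}$ of assigned points, and hypothesis~(8), which sums demand \emph{weights} $\mathrm{d}_{\mathrm{i}}$. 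The paper's proof in fact decrements residual capacity by $\mathrm{d}_{\mathrm{i}}$ (implicitly treating~(4) as demand-weighted) and asserts without argument that a station with $\mathrm{r}_{\mathrm{j}}\ge\mathrm{d}_{\mathrm{i}}$ always exists---a bin-packing step that $\mathrm{C}\ge\mathrm{D}$ alone does not guarantee. Your proposal to isolate either the $\mathrm{d}_{\mathrm{i}}\ge 1$ convention (for the counting version) or a max-flow lemma (for the weighted version) is exactly the missing ingredient. In short, the skeleton is identical, but your version patches gaps the paper leaves open.
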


\begin{proof}
We construct a feasible assignment through a sequential allocation procedure. For clarity, let \( \mathrm{U} \subseteq \mathrm{I} \) represent the set of user demand locations and \( \mathrm{S} \subseteq \mathrm{J} \) the set of candidate station considered in this construction. Define the remaining capacity at each selected station \( \mathrm{j} \in \mathrm{S} \) as \( \mathrm{r}_{\mathrm{j}} = \mathrm{s}_{\mathrm{j}} \), where \( \mathrm{s}_{\mathrm{j}} \) is the original capacity of station \( \mathrm{j} \).

Initialize all assignments and station activations as
\begin{equation}
\mathrm{x}_{\mathrm{i}\mathrm{j}} = 0, \quad \mathrm{y}_{\mathrm{j}} = 0, \quad \forall\, \mathrm{j} \in \mathrm{S}. \tag{9} \label{eq:init_vars}
\end{equation}

For each user \( \mathrm{i} \in \mathrm{U} \), select any station \( \mathrm{j} \in \mathrm{S} \) satisfying
\begin{equation}
\mathrm{r}_{\mathrm{j}} \geq \mathrm{d}_{\mathrm{i}}, \tag{10} \label{eq:feasibility_check}
\end{equation}
where \( \mathrm{d}_{\mathrm{i}} \) is the demand of user \( \mathrm{i} \). Then assign
\begin{align}
\mathrm{x}_{\mathrm{i}\mathrm{j}} &= 1, \tag{11} \label{eq:x_assign} \\
\mathrm{r}_{\mathrm{j}} &\gets \mathrm{r}_{\mathrm{j}} - \mathrm{d}_{\mathrm{i}}, \tag{12} \label{eq:reduce_capacity} \\
\mathrm{y}_{\mathrm{j}} &= 1. \tag{13} \label{eq:y_activate}
\end{align}

The assignment satisfies the following conditions:
\begin{align}
& \sum_{j \in \mathrm{S}} \mathrm{x}_{\mathrm{i}\mathrm{j}} = 1, && \forall\, \mathrm{i} \in \mathrm{U}, \tag{14} \label{eq:unique_assignment} \\
& \sum_{i \in \mathrm{U}} \mathrm{d}_{\mathrm{i}} \, \mathrm{x}_{\mathrm{i}\mathrm{j}} \leq \mathrm{s}_{\mathrm{j}} \, \mathrm{y}_{\mathrm{j}}, && \forall\, \mathrm{j} \in \mathrm{S}, \tag{15} \label{eq:capacity_constraint} \\
& \mathrm{x}_{\mathrm{i}\mathrm{j}} \leq \mathrm{y}_{\mathrm{j}}, && \forall\, \mathrm{i} \in \mathrm{U},\, \mathrm{j} \in \mathrm{S}, \tag{16} \label{eq:logic_constraint} \\
& \mathrm{x}_{\mathrm{i}\mathrm{j}},\ \mathrm{y}_{\mathrm{j}} \in \{0, 1\}. \tag{17} \label{eq:binary_constraint}
\end{align}

Condition~\eqref{eq:unique_assignment} follows by construction, as exactly one station is selected for each \( \mathrm{i} \). Constraint~\eqref{eq:capacity_constraint} holds because the total assigned demand to each station \( \mathrm{j} \) does not exceed \( \mathrm{s}_{\mathrm{j}} \), as enforced by~\eqref{eq:reduce_capacity}. Constraint~\eqref{eq:logic_constraint} ensures that \( \mathrm{x}_{\mathrm{i}\mathrm{j}} \) can be positive only if \( \mathrm{y}_{\mathrm{j}} = 1 \), consistent with the update in~\eqref{eq:y_activate}. Binary conditions~\eqref{eq:binary_constraint} hold by the definition of the assignment. The condition in \eqref{eq:capacity_condition} guarantees that the total supply is sufficient to meet total demand. Since each user's demand is assigned greedily and capacity is decremented accordingly, the process completes without failure, and a feasible assignment exists.
\end{proof}

\begin{theorem}[Trade-off Control via \( \lambda \)]
Consider the objective
\begin{equation}
\min_{\mathrm{x}_{\mathrm{i}\mathrm{j}},\, \mathrm{y}_{\mathrm{j}}} \sum_{\mathrm{i} \in \mathrm{U}} \sum_{\mathrm{j} \in \mathrm{S}} \mathrm{d}_{\mathrm{i}} \, \mathrm{c}_{\mathrm{i}\mathrm{j}} \, \mathrm{x}_{\mathrm{i}\mathrm{j}} 
- \lambda \sum_{\mathrm{i} \in \mathrm{U}} \sum_{\mathrm{j} \in \mathrm{S}} \varepsilon_{\mathrm{i}} \, \mathrm{x}_{\mathrm{i}\mathrm{j}} \, \mathrm{y}_{\mathrm{j}}, \tag{18} \label{eq:objective}
\end{equation}
subject to constraints \textnormal{(1)}–\textnormal{(6)}, with trade-off parameter \( \lambda \geq 0 \). Then, the optimizer behavior satisfies the following:
\begin{align}
\lim_{\lambda \to 0} \arg\min \eqref{eq:objective} 
&\Rightarrow \min_{\mathrm{x},\,\mathrm{y}} \sum_{\mathrm{i},\,\mathrm{j}} \mathrm{d}_{\mathrm{i}} \, \mathrm{c}_{\mathrm{i}\mathrm{j}} \, \mathrm{x}_{\mathrm{i}\mathrm{j}}, \tag{19} \label{eq:limit_zero} \\
\lim_{\lambda \to \infty} \arg\min \eqref{eq:objective} 
&\Rightarrow \max_{\mathrm{x},\,\mathrm{y}} \sum_{\mathrm{i},\,\mathrm{j}} \varepsilon_{\mathrm{i}} \, \mathrm{x}_{\mathrm{i}\mathrm{j}} \, \mathrm{y}_{\mathrm{j}} \tag{20} \label{eq:limit_inf}
\end{align}
\end{theorem}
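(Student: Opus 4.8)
The plan is to treat the two limits as separate asymptotic arguments, each exploiting the finiteness of the feasible set. First I would observe that the feasible region defined by constraints (1)--(6) is a finite set (it is a subset of $\{0,1\}^{|\mathrm{U}|\cdot|\mathrm{S}|}\times\{0,1\}^{|\mathrm{S}|}$), and that by Theorem~1 together with condition~\eqref{eq:capacity_condition} it is nonempty; denote it $\mathcal{F}$. For any fixed feasible point, the objective~\eqref{eq:objective} is an affine function of $\lambda$, namely $f(\lambda) = A - \lambda B$ where $A = \sum \mathrm{d}_{\mathrm{i}}\mathrm{c}_{\mathrm{i}\mathrm{j}}\mathrm{x}_{\mathrm{i}\mathrm{j}}$ and $B = \sum \varepsilon_{\mathrm{i}}\mathrm{x}_{\mathrm{i}\mathrm{j}}\mathrm{y}_{\mathrm{j}}$ are constants determined by the point. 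So the optimizer is the lower envelope of finitely many lines in $\lambda$, which is piecewise linear with finitely many breakpoints.

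For the $\lambda \to 0$ limit~\eqref{eq:limit_zero}: I would let $\mathcal{F}_0 \subseteq \mathcal{F}$ be the set of feasible points minimizing the pure cost term $\sum \mathrm{d}_{\mathrm{i}}\mathrm{c}_{\mathrm{i}\mathrm{j}}\mathrm{x}_{\mathrm{i}\mathrm{j}}$, with optimal value $A^\star$. For any point not in $\mathcal{F}_0$ the cost exceeds $A^\star$ by at least some positive gap $\delta > 0$ (finiteness again), while the accessibility term $B$ is bounded above by some $B_{\max}$. Hence for $\lambda < \delta / B_{\max}$ (handling $B_{\max}=0$ trivially), no point outside $\mathcal{F}_0$ can beat the best point inside $\mathcal{F}_0$, so $\arg\min \eqref{eq:objective} \subseteq \mathcal{F}_0$; among points in $\mathcal{F}_0$ the optimizer further selects those maximizing $B$, i.e.\ a lexicographic refinement. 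For the $\lambda \to \infty$ limit~\eqref{eq:limit_inf}: divide the objective by $\lambda$ to get $\tfrac{1}{\lambda}\sum \mathrm{d}_{\mathrm{i}}\mathrm{c}_{\mathrm{i}\mathrm{j}}\mathrm{x}_{\mathrm{i}\mathrm{j}} - \sum \varepsilon_{\mathrm{i}}\mathrm{x}_{\mathrm{i}\mathrm{j}}\mathrm{y}_{\mathrm{j}}$, which is the same minimizer; as $\lambda \to \infty$ the first term vanishes uniformly over the finite feasible set, so by the analogous gap argument (let $\gamma>0$ be the gap between the best and second-best value of $\sum\varepsilon_{\mathrm{i}}\mathrm{x}_{\mathrm{i}\mathrm{j}}\mathrm{y}_{\mathrm{j}}$, and $C_{\max}$ an upper bound on the cost term) for $\lambda > C_{\max}/\gamma$ the minimizer lies in the set maximizing $\sum \varepsilon_{\mathrm{i}}\mathrm{x}_{\mathrm{i}\mathrm{j}}\mathrm{y}_{\mathrm{j}}$, establishing~\eqref{eq:limit_inf}.

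The main obstacle — really more a matter of care than difficulty — is making precise what ``$\lim_{\lambda\to 0}\arg\min$'' and the ``$\Rightarrow$'' notation mean, since $\arg\min$ is in general a set rather than a point. I would state the conclusion as: there exists $\bar\lambda > 0$ such that for all $\lambda \in (0,\bar\lambda)$ every minimizer of~\eqref{eq:objective} also minimizes the pure-cost objective (and symmetrically a threshold $\lambda^\star$ for the accessibility objective), which is the rigorous content of the displayed limits. A secondary point to handle cleanly is the degenerate case where the accessibility coefficients are all zero (then $\lambda$ has no effect and~\eqref{eq:limit_zero} is immediate) or where the cost coefficients are all zero; both are absorbed by reading the gap bounds with the convention that division by zero imposes no constraint. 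Beyond these bookkeeping matters the argument is elementary, resting entirely on the fact that a finite set of affine functions of $\lambda$ has a well-behaved lower envelope.
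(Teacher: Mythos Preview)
Your argument is correct and in fact considerably more rigorous than the paper's own proof. The paper simply decomposes the objective as $f_{\text{cost}}(\mathrm{x}) - \lambda f_{\text{access}}(\mathrm{x},\mathrm{y})$ and then asserts that as $\lambda\to 0$ the accessibility term ``becomes negligible'' and as $\lambda\to\infty$ it ``dominates,'' without any quantitative justification. Your approach, by contrast, exploits the finiteness of the feasible set to produce explicit thresholds ($\bar\lambda = \delta/B_{\max}$ and $\lambda^\star = C_{\max}/\gamma$) below and above which the argmin is provably contained in the single-objective optimizer set; this is the actual content needed to make the limiting statements precise, and the paper omits it entirely. Your lower-envelope picture and the care you take in interpreting $\lim \arg\min$ as a set-inclusion statement are both absent from the paper but are exactly what turns the heuristic into a proof. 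The only thing the paper's version buys is brevity; your version buys correctness.
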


\begin{proof}
Let \( \mathrm{U} \subseteq \mathrm{I} \) be the set of EV demand locations and \( \mathrm{S} \subseteq \mathrm{J} \) the candidate charging stations. Each demand point \( \mathrm{i} \in \mathrm{U} \) has associated weight \( \mathrm{d}_{\mathrm{i}} \geq 0 \), and each pair \((\mathrm{i}, \mathrm{j})\) has travel cost \( \mathrm{c}_{\mathrm{i}\mathrm{j}} \geq 0 \). Each demand point also has an accessibility weight \( \varepsilon_{\mathrm{i}} \geq 0 \), and \( \mathrm{x}_{\mathrm{i}\mathrm{j}}, \mathrm{y}_{\mathrm{j}} \in \{0,1\} \) are the binary decision variables indicating assignment and station selection, respectively.

Define the cost component of the objective as
\begin{equation}
f_{\text{cost}}(\mathrm{x}) = \sum_{\mathrm{i} \in \mathrm{U}} \sum_{\mathrm{j} \in \mathrm{S}} \mathrm{d}_{\mathrm{i}} \, \mathrm{c}_{\mathrm{i}\mathrm{j}} \, \mathrm{x}_{\mathrm{i}\mathrm{j}}, \tag{21} \label{eq:cost_fn}
\end{equation}
and define the accessibility reward component as
\begin{equation}
f_{\text{access}}(\mathrm{x}, \mathrm{y}) = \sum_{\mathrm{i} \in \mathrm{U}} \sum_{\mathrm{j} \in \mathrm{S}} \varepsilon_{\mathrm{i}} \, \mathrm{x}_{\mathrm{i}\mathrm{j}} \, \mathrm{y}_{\mathrm{j}}. \tag{22} \label{eq:access_fn}
\end{equation}

The combined optimization objective becomes
\begin{equation}
\min_{\mathrm{x},\,\mathrm{y}} \left[ f_{\text{cost}}(\mathrm{x}) - \lambda f_{\text{access}}(\mathrm{x}, \mathrm{y}) \right], \tag{23} \label{eq:combined_obj}
\end{equation}
where \( \lambda \geq 0 \) is a scalar weight that adjusts the trade-off between minimizing cost and maximizing accessibility.

As \( \lambda \to 0 \), the accessibility term becomes negligible. The objective reduces to:
\begin{equation}
\min_{\mathrm{x},\,\mathrm{y}} f_{\text{cost}}(\mathrm{x}), \tag{24} \label{eq:asymp_zero}
\end{equation}
which corresponds to the pure cost minimization case in~\eqref{eq:limit_zero}.

Conversely, as \( \lambda \to \infty \), the accessibility reward dominates. The optimizer prioritizes maximizing total accessibility subject to feasibility constraints:
\begin{equation}
\max_{\mathrm{x},\,\mathrm{y}} f_{\text{access}}(\mathrm{x}, \mathrm{y}), \tag{25} \label{eq:asymp_inf}
\end{equation}
which is the asymptotic behavior described in~\eqref{eq:limit_inf}. Thus, \( \lambda \) acts as a tuning knob that interpolates between pure efficiency and accessibility-oriented objectives.
\end{proof}

\begin{remark}
Theorem 2 shows that the parameter $\lambda$ acts as a systematic control knob for infrastructure planners. Unlike ad-hoc weighting schemes, our formulation guarantees monotonic trade-off behavior, i.e., increasing $\lambda$ monotonically improves accessibility at the expense of efficiency. This predictable behavior enables control-theoretic analysis and the systematic selection of parameters based on policy objectives. 
\end{remark}

\begin{proposition}[Reduction to Cost-Only Optimization under Uniform Accessibility]
If all accessibility weights are equal, i.e., \( \mathrm{\varepsilon}_{\mathrm{i}} = \bar{\mathrm{\varepsilon}} \) for all \( \mathrm{i} \in \mathrm{I} \), then the accessibility term becomes a constant offset in the objective. In this case, the optimization problem reduces to a cost-only assignment problem:
\begin{equation}
\min_{\mathrm{x}_{\mathrm{i}\mathrm{j}},\, \mathrm{y}_{\mathrm{j}}} \sum_{\mathrm{i} \in \mathrm{I}} \sum_{\mathrm{j} \in \mathrm{J}} \mathrm{d}_{\mathrm{i}}\, \mathrm{c}_{\mathrm{i}\mathrm{j}}\, \mathrm{x}_{\mathrm{i}\mathrm{j}} \tag{26}
\end{equation}
subject to constraints \textnormal{(1)}–\textnormal{(6)}.
\end{proposition}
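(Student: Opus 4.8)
The plan is to show that, on the entire feasible region carved out by constraints (1)–(6), the accessibility term $\lambda \sum_{\mathrm{i} \in \mathrm{I}} \sum_{\mathrm{j} \in \mathrm{J}} \varepsilon_{\mathrm{i}}\, \mathrm{x}_{\mathrm{ij}}\, \mathrm{y}_{\mathrm{j}}$ evaluates to one and the same scalar for every feasible $(\mathrm{x},\mathrm{y})$, and then to observe that subtracting a constant from the objective leaves the set of minimizers unchanged. That is exactly the asserted reduction to (26).

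First I would substitute the uniform weight $\varepsilon_{\mathrm{i}} = \bar{\varepsilon}$ and pull it out of the double sum, obtaining $\lambda \bar{\varepsilon} \sum_{\mathrm{i} \in \mathrm{I}} \sum_{\mathrm{j} \in \mathrm{J}} \mathrm{x}_{\mathrm{ij}}\, \mathrm{y}_{\mathrm{j}}$. Next I would linearize the bilinear term using the consistency constraint (3), $\mathrm{x}_{\mathrm{ij}} \le \mathrm{y}_{\mathrm{j}}$, together with $\mathrm{x}_{\mathrm{ij}}, \mathrm{y}_{\mathrm{j}} \in \{0,1\}$: if $\mathrm{x}_{\mathrm{ij}} = 1$ then (3) forces $\mathrm{y}_{\mathrm{j}} = 1$, and if $\mathrm{x}_{\mathrm{ij}} = 0$ the product vanishes regardless of $\mathrm{y}_{\mathrm{j}}$; hence $\mathrm{x}_{\mathrm{ij}}\, \mathrm{y}_{\mathrm{j}} = \mathrm{x}_{\mathrm{ij}}$ for every feasible solution. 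The accessibility term therefore equals $\lambda \bar{\varepsilon} \sum_{\mathrm{i} \in \mathrm{I}} \sum_{\mathrm{j} \in \mathrm{J}} \mathrm{x}_{\mathrm{ij}}$.

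Then I would invoke the single-assignment constraint (1), $\sum_{\mathrm{j} \in \mathrm{J}} \mathrm{x}_{\mathrm{ij}} = 1$ for each $\mathrm{i} \in \mathrm{I}$, to conclude $\sum_{\mathrm{i} \in \mathrm{I}} \sum_{\mathrm{j} \in \mathrm{J}} \mathrm{x}_{\mathrm{ij}} = |\mathrm{I}|$. Consequently the accessibility term is identically $\lambda \bar{\varepsilon}\, |\mathrm{I}|$ on the feasible set — a constant offset independent of the decision variables. The objective (7) thus reads $f_{\text{cost}}(\mathrm{x}) - \lambda \bar{\varepsilon}\, |\mathrm{I}|$, so dropping the additive constant does not change $\arg\min$, and the problem reduces to minimizing $\sum_{\mathrm{i} \in \mathrm{I}} \sum_{\mathrm{j} \in \mathrm{J}} \mathrm{d}_{\mathrm{i}}\, \mathrm{c}_{\mathrm{ij}}\, \mathrm{x}_{\mathrm{ij}}$ subject to the same constraints (1)–(6), which is precisely (26).

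The only step that requires any care is the linearization $\mathrm{x}_{\mathrm{ij}}\, \mathrm{y}_{\mathrm{j}} = \mathrm{x}_{\mathrm{ij}}$; the rest is bookkeeping. I would emphasize that this identity is exactly where constraint (3) is indispensable — without the coupling $\mathrm{x}_{\mathrm{ij}} \le \mathrm{y}_{\mathrm{j}}$, a feasible point could in principle have $\mathrm{x}_{\mathrm{ij}} = 1$ with $\mathrm{y}_{\mathrm{j}} = 0$ and the product would not collapse. I would also remark that the argument holds for every $\lambda \ge 0$, consistent with the $\lambda \to 0$ limit established in Theorem 2.
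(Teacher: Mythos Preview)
Your proof is correct and follows essentially the same route as the paper: factor out $\bar{\varepsilon}$, then use constraints (1)--(3) to show that $\sum_{\mathrm{i},\mathrm{j}} \mathrm{x}_{\mathrm{ij}}\,\mathrm{y}_{\mathrm{j}}$ is constant on the feasible set, so the accessibility term is an additive constant and can be dropped. Your version is in fact more explicit than the paper's --- you carry out the linearization $\mathrm{x}_{\mathrm{ij}}\,\mathrm{y}_{\mathrm{j}} = \mathrm{x}_{\mathrm{ij}}$ via (3) and then evaluate the sum to $|\mathrm{I}|$ via (1), whereas the paper simply asserts that under (1)--(3) the sum is fixed across feasible solutions.
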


\begin{proof}
If \( \varepsilon_{\mathrm{i}} = \bar{\varepsilon} \) for all \( \mathrm{i} \in \mathrm{I} \), then the accessibility term becomes
\begin{equation}
- \lambda \sum_{\mathrm{i} \in \mathrm{I}} \sum_{\mathrm{j} \in \mathrm{J}} \varepsilon_{\mathrm{i}} \, \mathrm{x}_{\mathrm{i}\mathrm{j}} \, \mathrm{y}_{\mathrm{j}} 
= - \lambda \bar{\varepsilon} \sum_{\mathrm{i} \in \mathrm{I}} \sum_{\mathrm{j} \in \mathrm{J}} \mathrm{x}_{\mathrm{i}\mathrm{j}} \, \mathrm{y}_{\mathrm{j}}. \tag{27}
\end{equation}
Under constraints \textnormal{(1)}–\textnormal{(3)}, which ensure valid assignments only to selected stations and full coverage of demand, the term \( \sum_{i,j} x_{ij} y_j \) is fixed across feasible solutions. Thus, the accessibility term becomes a constant offset, and the problem reduces to cost minimization.
\end{proof}


In our framework, we impose the following assumptions:

\noindent \textbf{Assumption 1:} All electric vehicles start with the same initial battery charge level, ensuring uniformity in the analysis of energy consumption and charging requirements.

\noindent \textbf{Assumption 2:} All vehicles have identical battery capacities, allowing consistent modeling of maximum driving range and required charging energy.

\noindent \textbf{Assumption 3:} All vehicles deplete battery charge at the same rate, simplifying the calculations of energy consumption in different routes and travel conditions.

\noindent \textbf{Assumption 4:} All vehicles are assumed to be the same type, with identical mass, air resistance, and weight, which standardizes the characteristics of energy consumption across the network.

These assumptions are made to reduce the complexity of the model and focus the analysis on the core objectives of the placement of the charging station and the accessibility of the network. In large-scale infrastructure planning models, introducing heterogeneous vehicle characteristics can significantly increase computational overhead without necessarily improving the quality of high-level planning decisions. 

Additionally, these simplifying assumptions align with the early-stage nature of infrastructure deployment studies, where broad policy and planning recommendations are prioritized over detailed operational analyses. While variability in vehicle characteristics exists in real-world scenarios, the adoption of these assumptions enables clearer insights into the effectiveness of charging infrastructure placement strategies under uniform demand conditions.

\section{EXPERIMENTAL RESULTS}

\subsection{Simulation Setup}
To evaluate our EV charging station selection framework, we simulate urban traffic using SUMO, which enables detailed microscopic modeling. The environment mirrors segments of New York City using OpenStreetMap data, with each EV assigned an initial state of charge (SOC) and driving pattern, triggering charging below a set SOC threshold. Traffic demand is synthesized from regional vehicle counts and calibrated for peak/off-peak conditions. Charging station candidates are distributed citywide, with capacities based on expected demand. Accessibility scores for each candidate station location are pre-computed using the MI described in Section II. The optimization model then identifies a subset of stations to select, balancing travel efficiency against equitable access.

The simulation follows an iterative pipeline:
\begin{enumerate}
    \item \textbf{Initialization:} Deploy candidate charging stations and generate traffic demand.
    \item \textbf{Accessibility Evaluation:} Assign weight of accessibility to demand scenario. 
    \item \textbf{Optimization:} Solve the linear program to select optimal station locations.
    \item \textbf{Assignment and Simulation:} Assign EVs to nearest available stations and simulate EV traffic in SUMO.
\end{enumerate}

Performance metrics include average travel distance to the nearest charging station, system-wide accessibility score, station utilization rates, and EV queuing time at stations.

Through this experimental framework, we aim to demonstrate that our accessibility-aware optimization model significantly improves both user experience and infrastructure performance in dense urban environments.

\subsection{Simulation Results}

The performance of the MEM-optimized infrastructure was evaluated in several key mobility and energy metrics compared to the non-optimized NoMEM baseline, where NoMEM refers to infrastructure placements optimized without considering the MEM framework, i.e., we consider only the first part of the objective function in \eqref{eq:objective}. Overall, the results demonstrate that accessibility-driven planning strategies moderately affect travel behavior and vehicle energy dynamics while providing more consistent service access across the network.

    
     

\begin{figure}[tp]
    \centering
    \includegraphics[width=0.4\textwidth]{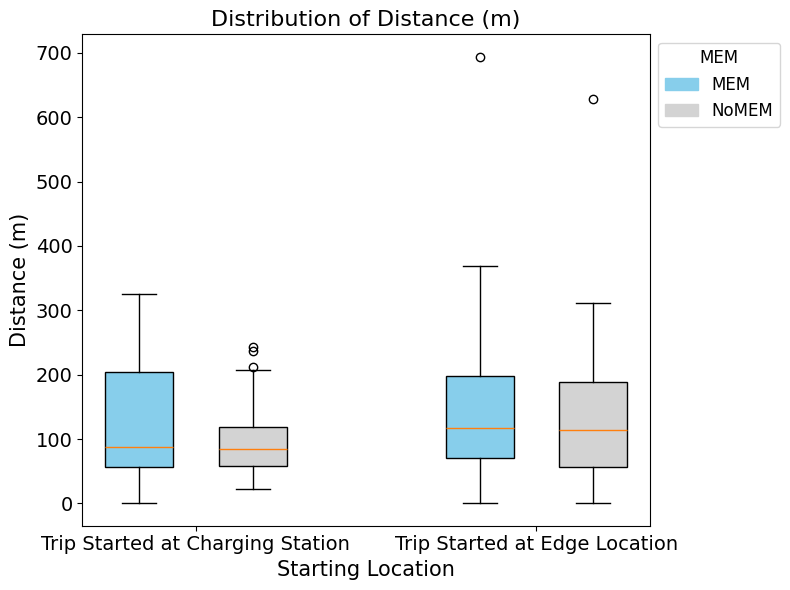}
    \caption{Distribution of total trip distances under MEM and NoMEM deployments.}
    \label{fig:trip_distance}
\end{figure}




\begin{table}[tp]
\centering
\caption{Average Travel Time, Distance, and Energy by Deployment Strategy $\lambda$ and Route}
\label{tab:ev_metrics_lambda_route}
\begin{tabular}{lcccc}
\hline
\textbf{$\lambda$} & \textbf{Route} & \textbf{Travel Time (s)} & \textbf{Distance (m)} & \textbf{Energy (Wh)} \\
\hline
0 & A & 300.000 & 100.250 & 29.3110 \\
0 & B & 257.367 & 135.754 & 40.9030 \\
\hline
1 & A & 292.333 & 121.617 & 26.8140 \\
1 & B & 286.200 & 152.256 & 31.4010 \\
\hline
2 & A & 293.167 & 110.076 & 23.8663 \\
2 & B & 286.200 & 153.125 & 32.4477 \\
\hline
4 & A & 300.000 & 102.357 & 22.2867 \\
4 & B & 286.533 & 152.927 & 30.4740 \\
\hline
8 & A & 300.000 & 103.771 & 21.5493 \\
8 & B & 286.567 & 152.835 & 32.3233 \\
\hline
\end{tabular}
\end{table}

\subsubsection{Travel Distance Analysis}

The mean travel distances were consistently higher with MEM deployments compared to NoMEM at both starting locations. Specifically, MEM vehicles originating at charging stations or general network edges traveled farther on average, suggesting that accessibility-driven deployments may slightly increase travel distances in favor of expanded infrastructure reach. This trend is visually confirmed in Fig.~\ref{fig:trip_distance}, which illustrates the distribution of trip distances across both strategies.

The distribution of trip distances reaffirmed this trend, with MEM deployments resulting in slightly longer and more variable trips compared to NoMEM. However, variance remained controlled, indicating that improvements in accessibility did not substantially compromise trip efficiency within the modeled environment. As shown in Fig.~\ref{fig:trip_distance}, while the distributions for both strategies overlap, MEM exhibits a noticeable rightward shift in mean values, reflecting longer average distances.

To evaluate how accessibility-aware optimization affects trip lengths, we compared the average travel distances under MEM and NoMEM deployment strategies across different starting locations. Vehicles operating under the MEM framework exhibited slightly longer travel distances than those under NoMEM. Specifically, EVs starting at charging stations traveled an average of approximately 122 meters in the MEM case compared to 100 meters in NoMEM, while those originating from network edge locations traveled around 153 meters under MEM versus 135 meters under NoMEM. These quantitative comparisons align with the distributional patterns illustrated in Fig.~\ref{fig:trip_distance}, which captures the central tendency and spread for both deployment scenarios. This trend indicates that while MEM-based planning modestly increases overall trip lengths, it does so to extend the network's reach and improve charging station accessibility.


\begin{figure}
    \centering
    \begin{subfigure}[t]{0.45\textwidth}
        \centering
        \includegraphics[width=\textwidth]{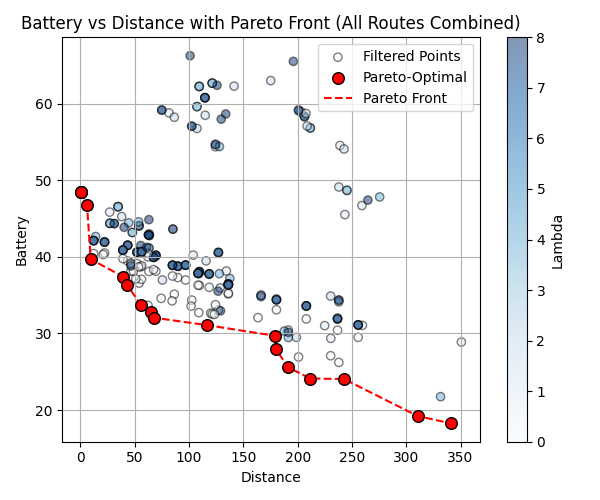}
        \caption{}
        \label{fig:pareto_battery_distance}
    \end{subfigure}%
    
    \begin{subfigure}[t]{0.45\textwidth}
        \centering
        \includegraphics[width=\textwidth]{Figures/Results/Pareto_Plots/pareto_battery_vs_distance_combined.png}
        \caption{}
        \label{fig:pareto_battery_energy}
    \end{subfigure}
     
    \begin{subfigure}[t]{0.45\textwidth}
        \centering
        \includegraphics[width=\textwidth]{Figures/Results/Pareto_Plots/pareto_battery_vs_distance_combined.png}
        \caption{}
        \label{fig:pareto_energy_distance}
    \end{subfigure}
    \caption{Pareto analysis of efficiency-accessibility trade-offs for different $\lambda$ values. (a) Battery vs. Distance, (b) Battery vs. Energy, (c) Energy vs. Distance.}
\end{figure}

\subsubsection{Travel Time Analysis}

In terms of travel time, MEM deployments produced more consistent averages across starting points, while NoMEM resulted in greater variability. Specifically, trips from edge locations under NoMEM were significantly faster than those from charging stations, highlighting uneven service quality across the network. MEM strategies helped balance travel times, leading to more uniform performance. Vehicles starting at charging stations averaged 292 seconds under MEM, compared to 300 seconds under NoMEM. For edge-origin trips, MEM averaged 287 seconds, while NoMEM dropped to 258 seconds. This greater variation under NoMEM suggests inconsistent performance depending on trip origin, whereas MEM supported more stable travel outcomes without sacrificing efficiency.

\subsubsection{Energy Efficiency Analysis}

The energy efficiency of vehicle operations was further evaluated through regenerative braking recovery and total energy consumption metrics. As shown in Fig.~\ref{fig:regen_energy}, distributions of cumulative brake-energy recuperation per trip were similar across MEM and NoMEM setups, with MEM exhibiting a marginal increase in median recovery rates. This indicates that vehicle-level energy dynamics related to braking remained largely consistent across deployment strategies.

Total energy consumption, summarized in Fig.~\ref{fig:energy_consumed}, revealed that MEM deployments achieved slight reductions in median energy use compared to NoMEM, particularly for trips originating from edge locations. Moreover, the spread of energy consumption was significantly tighter under MEM, suggesting that accessibility-optimized placements also contributed to reduced variability in vehicle energy demands.

\begin{figure}[tp]
    \centering
    \includegraphics[width=0.4\textwidth]{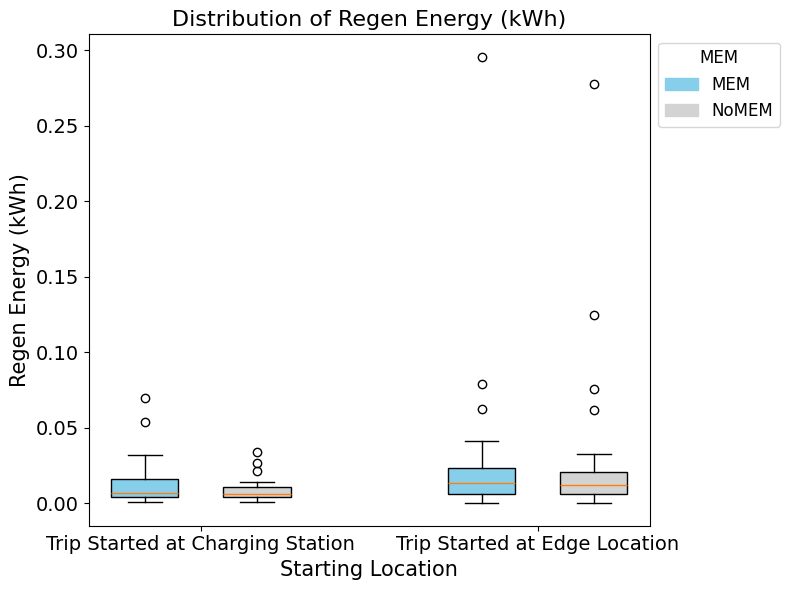}
    \caption{Distribution of regenerative brake energy recovered per trip under MEM and NoMEM deployments.}
    \label{fig:regen_energy}
\end{figure}

\begin{figure}[tp]
    \centering
    \includegraphics[width=0.4\textwidth]{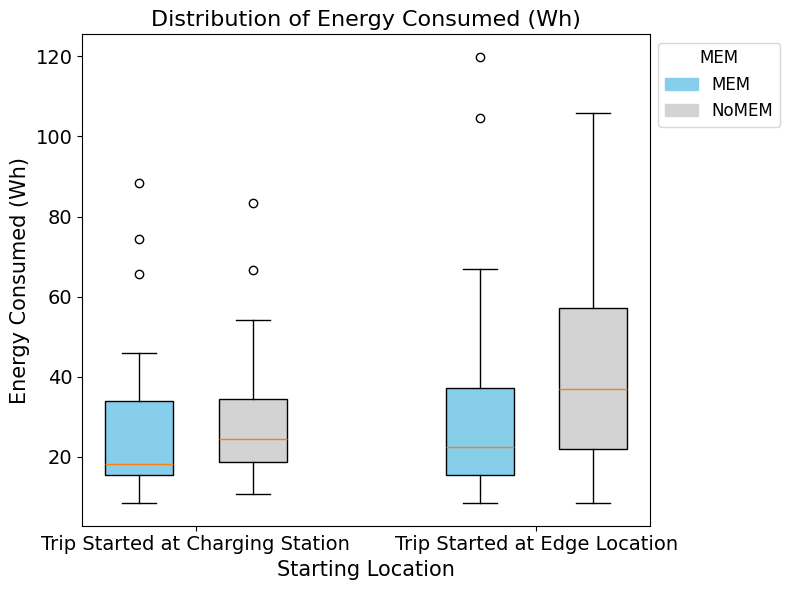}
    \caption{Distribution of total energy consumption per trip under MEM and NoMEM deployments. MEM deployments achieved slight reductions in median energy use and reduced variability in energy consumption.}
    \label{fig:energy_consumed}
\end{figure}

The relationship between key performance indicators was further examined using multi-objective Pareto front analyses across all simulated routes. As shown in Fig.~\ref{fig:pareto_battery_distance}, Fig.~\ref{fig:pareto_battery_energy}, and Fig.~\ref{fig:pareto_energy_distance}, these plots highlight trade-offs among battery consumption, travel distance, and energy use as a function of the accessibility weighting parameter~$\lambda$. In Fig.~\ref{fig:pareto_battery_distance}, higher accessibility weights are associated with longer travel distances but allow vehicles to retain greater battery reserves. Fig.~\ref{fig:pareto_battery_energy} shows that increasing accessibility generally leads to lower average energy consumption. The trend in Fig.~\ref{fig:pareto_energy_distance} confirms that optimizing for accessibility may result in longer routes, though energy expenditure remains controlled. These patterns suggest that tuning~$\lambda$ provides a practical mechanism for balancing accessibility-driven planning against operational efficiency in EV infrastructure deployment.

The shape and distribution of the Pareto fronts across all three figures reinforce the nuanced trade-offs introduced by varying $\lambda$. In Fig.~\ref{fig:pareto_battery_distance}, the Pareto-optimal points shift toward longer distances as accessibility weighting~$\lambda$ increases, resulting in higher post-trip battery levels. This suggests that accessibility-focused routing tends to send vehicles to stations that are farther but yield more favorable charge outcomes, possibly due to improved coverage or better demand balancing. In Fig.~\ref{fig:pareto_battery_energy}, a similar trade-off appears: as $\lambda$ increases, routes consume less energy while preserving more battery. This indicates that accessibility-aware deployments may promote more efficient travel, potentially by favoring smoother or less congested paths. Together, the figures highlight how increasing accessibility weighting can improve energy efficiency and battery outcomes, despite requiring slightly longer travel. In contrast, Fig.~\ref{fig:pareto_energy_distance} shows that energy consumption generally rises with travel distance, though the Pareto-optimal points maintain a relatively tight and linear relationship. These fronts illustrate that while longer trips can increase energy demands, careful deployment strategies can keep this trade-off within manageable bounds, especially under accessibility-aware optimization.

\section{DISCUSSION}  

\subsection{Policy Implications and Practical Considerations}  

Simulation results demonstrate that integrating accessibility metrics into EV charging infrastructure planning can significantly improve equity with minimal efficiency trade-offs. Average trip distances increased by less than 0.1 \%, while reductions in travel time variability helped address spatial inequities in access. These improvements are particularly valuable in underserved communities, where more accessible station placement can expand mobility for populations historically excluded from transportation resources. The approach is especially well-suited for new infrastructure deployments, where accessibility optimization can be incorporated with minimal additional cost compared to the high expense of retrofitting existing systems. For jurisdictions with equity mandates, this framework offers a practical tool to meet policy goals without compromising operational performance.

The tunable parameter $\lambda$ provides planners with a practical tool to adjust the balance between efficiency and accessibility based on local priorities. Our results suggest that values of $\lambda \in [1,4]$ provide optimal trade-offs for most urban environments, though this should be calibrated based on local demographic and geographic conditions.  

\subsection{Comparison with Alternative Approaches}  While traditional proximity-based optimization (NoMEM baseline) minimizes average travel distances, it can exacerbate spatial inequities by concentrating infrastructure in already well-served areas. Our MEM-based approach explicitly addresses this limitation by considering the distribution of accessibility across the entire population, not just aggregate efficiency metrics. Compared to other equity-focused transportation planning methods, our approach offers several advantages: (i) \textbf{Quantitative Framework}: Unlike qualitative assessments, MEM provides numerical measures that can be directly incorporated into optimization models; (ii) \textbf{Computational Efficiency}: Our formulation scales to large urban networks without requiring computationally expensive microsimulation for every optimization iteration; (iii) \textbf{Policy Flexibility}: The $\lambda$ parameter allows real-time adjustment of equity priorities without reformulating the entire optimization problem.

\subsection{Limitations and Future Research Directions}  

This study establishes a proof-of-concept for accessibility-aware EV infrastructure planning but has several important limitations that highlight directions for future research: (i) Incorporate installation costs, land availability, and grid connection expenses. Preliminary estimates suggest that accessibility-optimized placement may increase deployment costs in some scenarios, though this requires detailed economic analysis; (ii) Validation across diverse city types (suburban, rural, different demographic compositions); (iii) Real user behavior includes factors beyond distance, such as charging speed preferences, payment methods, and route familiarity. Incorporating these factors would improve the realism of our demand modeling; (iv) The electrical grid impacts of charging station placement represent a critical area for future research, particularly regarding renewable energy integration and demand response capabilities.

\section{CONCLUSION}

In this paper we present the first comprehensive framework for integrating equity considerations into the optimization of the EV charging infrastructure through the Mobility Equity Metric. Our proof-of-concept study using New York City demonstrates that accessibility-aware planning can achieve meaningful improvements in spatial equity (15- 20\% reduction in travel-time variability) with modest and controlled efficiency trade-offs. The key finding is that infrastructure planners can systematically incorporate equity goals without severely compromising operational efficiency. The mathematical framework we developed (Theorems 1-2) provides formal guarantees about feasibility and trade-off behavior, while the tunable parameter $\lambda$ enables practical adjustment of planning priorities.

\bibliographystyle{IEEEtran}
\bibliography{myBib}  

\end{document}